\newtheorem{thm}{Theorem}
\newtheorem{lemma}{Lemma}
\newtheorem{corollary}{Corollary}
\begin{document}

\begin{CJK*}{GBK}{song}

\begin{center}
\LARGE\bf Monogamy and polygamy relations of quantum correlations for multipartite systems
\end{center}

\begin{center}
\rm  Mei-Ming Zhang,$^1$ \  Naihuan Jing,$^{2, 1,*}$  \ and  Hui Zhao,$^3$
\end{center}

\begin{center}
\begin{footnotesize} \sl
$^1$ Department of Mathematics, Shanghai University, Shanghai 200444, China  %

$^2$ Department of Mathematics, North Carolina State University, Raleigh, NC 27695, USA

$^3$ College of Applied Sciences, Beijing University of Technology, Beijing 100124, China

$^*$ Corresponding author: jing@ncsu.edu

\end{footnotesize}
\end{center}

\begin{center}
\begin{minipage}{15.5cm}
\parindent 20pt\footnotesize
We study the monogamy and polygamy inequalities of quantum correlations in arbitrary dimensional multipartite quantum systems. We first derive the monogamy inequality of the $\alpha$th ($0\leq\alpha\leq\frac{r}{2}, r\geq2$) power of concurrence for any $2\otimes2\otimes2^{n-2}$ tripartite states and generalize it to the $n$-qubit quantum states. In addition to concurrence, we show that the monogamy relations are satisfied by other quantum correlation measures
such as entanglement of formation. Moreover, the polygamy inequality of the $\beta$th ($\beta\leq0$) power of concurrence and the $\beta$th ($\beta\geq s, 0\leq s\leq1$) power of the negativity are presented for $2\otimes2\otimes2^{n-2}$. We then obtain the polygamy inequalities of quantum correlations for multipartite states. Finally, our results are shown to be tighter than previous studies using detailed examples.
\end{minipage}
\end{center}

\begin{center}
\begin{minipage}{15.5cm}
\begin{minipage}[t]{2.3cm}{\bf Keywords:}\end{minipage}
\begin{minipage}[t]{13.1cm}
Monogamy, Polygamy, Concurrence, Negativity
\end{minipage}\par\vglue8pt

\end{minipage}
\end{center}

\section{Introduction}
In quantum theory there are different kinds of correlations between multipartite quantum states.
Understanding 
the nature of quantum correlations is a challenge in the study of quantum information processing.
A fundamental property of quantum correlations is that it can be monogamous for multipartite quantum systems.
To be more clearly, if a correlation measure is monogamous, it puts a restriction on the shareability of correlation between the different parties of a composite quantum state.
The monogamy of quantum correlations plays a significant role in quantum world such as the security of quantum cryptography [1].

Considerable efforts 
have been drawn to this task ever since Coffman, Kundu, and Wootters (CKW) presented the first quantitative monogamy relation in [2] for three-qubit states.
It was showed that the squared concurrence [3] and the squared convex-roof extended negativity (CREN) [4,5] satisfied the monogamy relations for multi-qubit states.
Generalized families of monogamy inequalities related to the $\alpha$th ($\alpha\geq2$) power of concurrence [6, 7] and CREN [7-9] tightened the monogamy inequalities, which are useful in featuring multipartite entangled systems in entanglement distributions.
Another important concept is the assisted entanglement, which can be viewed as a dual physical quantity to entanglement measure and gives rise to polygamy relations that are dually monogamous property in multipartite quantum systems.
The authors in [10] provided a characterization of multiqubit entanglement constraints in terms of the assistance of concurrence and negativity.
By using the square of CREN and the Hamming weight of the binary vector related with the distribution of subsystems, the polygamy inequalities in terms of the $\beta$th power of CREN for $0\leq\beta\leq1$ were established in [11].
Tighter polygamy inequalities for the $\beta$th ($0\leq\beta\leq1$) power of concurrence [12] and the CREN [13] were investigated.
Although lots of monogamy and polygamy relations were proposed, there were only a few monogamy relations for the $\alpha$th ($0\leq\alpha\leq1$) power of entanglement measures and polygamy relations of the $\beta$th power for $\beta\geq1$ [14, 15].
Besides concurrence and negativity, unified-$(q,s)$ entanglement also have monogamy and polygamy relations.
Using unified-$(q,s)$ entanglement with real parameters $q$ and $s$, a class of monogamy and polygamy inequalities of multi-qubit systems were proposed [14-20].

In this paper, our main goal is to present unified and tightened monogamy and polygamy relations of entanglement measures including the concurrence, the negativity and the entanglement of formation for multipartite systems. In Section 2, we present the monogamy inequalities of concurrence for $2\otimes2\otimes2^{n-2}$ and $n$-qubit quantum states. We generalize the monogamy relation to the  measures of entanglement for the multipartite quantum systems. By detailed examples, our results are seen to be superior to the previously published results. In Section 3, we derive the polygamy inequalities of concurrence and negativity for $2\otimes2\otimes2^{n-2}$ and $n$-qubit quantum systems. We generalize the polygamy inequalities to the quantum correlation measures for the multipartite quantum systems. We also give an example to show that our bound is tighter than previous available results. Comments and conclusions are given in Section 4.

\section{Tighter monogamy relations of quantum correlations}
Let $H_{A_1}$ and $H_{A_2}$ be $d_{A_1}$- and $d_{A_2}$-dimensional Hilbert spaces. The concurrence of a bipartite quantum pure state $|\varphi\rangle_{A_1A_2}\in H_{A_1}\otimes\ H_{A_2}$ is defined by [21]
\begin{eqnarray}
C(|\varphi\rangle_{A_1A_2})=\sqrt{2[1-tr(\rho_{A_1}^2)]},
\end{eqnarray}
where $\rho_{A_1}$ is the reduced density matrix of $\rho=|\varphi\rangle_{A_1A_2}\langle\varphi|$, i.e, $\rho_{A_1}=tr_{A_2}(\rho)$. For a mixed bipartite quantum state $\rho_{A_1A_2}=\sum_ip_i|\varphi_i\rangle_{A_1A_2}\langle\varphi_i|\in H_{A_1}\otimes H_{A_2}$, the concurrence is given by the convex roof
\begin{eqnarray}
C(\rho_{A_1A_2})=\min_{\{p_i,|\varphi_i\rangle\}}\sum_ip_iC(|\varphi_i\rangle_{A_1A_2}),
\end{eqnarray}
where the minimum is taken over all possible convex partitions of $\rho_{A_1A_2}$ into pure state ensembles $\{p_i,|\varphi_i\rangle\}$, $0\leq p_i\leq1$ and $\sum_ip_i=1$.

Remarkably, for a 2-qubit mixed state $\rho$, the concurrence of $\rho$ is given by [2]
\begin{eqnarray}
C(\rho)=max\{\lambda_1-\lambda_2-\lambda_3-\lambda_4, 0\},
\end{eqnarray}
where $\lambda_{i}$, $i=1,\cdots,4$, denote the square roots of nonnegative eigenvalues of the matrix $\rho(\sigma_y\otimes\sigma_y)\rho^*(\sigma_y\otimes\sigma_y)$ in descending order, $\sigma_y$ is the Pauli matrix, and $\rho^*$ denotes the complex conjugate of $\rho$.

For an $n$-qubit quantum states $\rho_{A_1|A_2A_3\cdots A_n}$, regarded as a bipartite state under bipartite partition $A_1|A_2A_3\cdots A_n$, the concurrence satisfies the monogamy inequality for $\alpha\geq2$ [6]:
\begin{eqnarray}
C^\alpha(\rho_{A_1|A_2A_3\cdots A_n})\geq C^\alpha(\rho_{A_1A_2})+C^\alpha(\rho_{A_1A_3})+\cdots+C^\alpha(\rho_{A_1A_n}),
\end{eqnarray}
where $\rho_{A_1A_i}$, $i=2,\cdots,n$, are the reduced density matrices of $\rho$, i.e, $\rho_{A_1A_i}=tr_{A_2\cdots A_{i-1}A_{i+1}\cdots A_n}(\rho)$.

\begin{lemma}\label{lemma:1}
For real numbers $k\geq1$, $t\geq k$ and $\frac{1}{2}\leq p\leq1$, we have for $0\leq x\leq\frac{1}{2}$:
\begin{eqnarray}
(1+t)^x\geq p^x+\frac{(1+k)^x-p^x}{k^x}t^x.
\end{eqnarray}
\end{lemma}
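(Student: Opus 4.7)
The plan is to reformulate the desired inequality as a monotonicity statement and then verify that monotonicity by a single derivative computation. Dividing the inequality through by the positive quantity $t^x$, the claim is equivalent to
\begin{equation*}
\frac{(1+t)^x - p^x}{t^x} \geq \frac{(1+k)^x - p^x}{k^x}.
\end{equation*}
So if I define $f(s) = \dfrac{(1+s)^x - p^x}{s^x}$, then the lemma reduces to proving $f(t) \geq f(k)$ whenever $t \geq k$, and this follows as soon as $f$ is non-decreasing on $[k, \infty)$. The boundary cases $x = 0$ (both sides equal $1$) and $t = k$ (both sides equal $(1+k)^x$) are trivial equalities, so I may assume $0 < x \leq \tfrac{1}{2}$ and $t > k$.

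The next step is to compute $f'$. Applying the quotient rule to $f(s)$ and pulling a common factor $(1+s)^{x-1}$ out of the numerator, the terms $(1+s)^{x-1}\!\cdot\! s$ and $-(1+s)^x$ combine into $-(1+s)^{x-1}$, producing the clean expression
\begin{equation*}
f'(s) = \frac{x\bigl[p^x - (1+s)^{x-1}\bigr]}{s^{x+1}}.
\end{equation*}
Hence the sign of $f'(s)$ is governed by the single inequality $p^x \geq (1+s)^{x-1}$.

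I would then verify this inequality by using each of the three hypotheses exactly once. Since $x - 1 \leq 0$, the function $(1+s)^{x-1}$ is decreasing in $s$, so on $[k,\infty)$ it is bounded above by $(1+k)^{x-1}$; and the hypothesis $k \geq 1$ upgrades this to $(1+k)^{x-1} \leq 2^{x-1}$. On the other side, $p \geq \tfrac{1}{2}$ together with $x \geq 0$ gives $p^x \geq 2^{-x}$. The sign condition therefore reduces to $2^{-x} \geq 2^{x-1}$, which rearranges to $1 \geq 2x$\,---\,precisely the hypothesis $x \leq \tfrac{1}{2}$. This forces $f'(s) \geq 0$ on $[k,\infty)$, so $f(t) \geq f(k)$, and multiplying back through by $t^x > 0$ recovers the lemma.

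The only delicate point I anticipate is the quotient-rule simplification: the numerator of $f'(s)$ must collapse to the clean sign-controlling form above, otherwise the subsequent bounds cannot be matched one-for-one with the hypotheses. Once that factorization is in hand, each hypothesis slots in naturally ($k \geq 1$ bounding $(1+k)^{x-1} \leq 2^{x-1}$, $p \geq \tfrac{1}{2}$ bounding $p^x \geq 2^{-x}$, and $x \leq \tfrac{1}{2}$ closing the final comparison), so I do not foresee any other obstacle.
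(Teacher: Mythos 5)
Your proof is correct and is essentially the paper's argument in different clothing: the paper studies $f(x,y)=(1+y)^x-(py)^x$ and shows it is decreasing in $y$ on $(0,1/k]$, which under the substitution $y=1/s$ is exactly your claim that $s\mapsto\frac{(1+s)^x-p^x}{s^x}$ is non-decreasing on $[k,\infty)$, and both arguments hinge on the same sign condition $p^x\geq(1+s)^{x-1}$. The only (cosmetic) difference is how that condition is verified — you compare both sides to powers of $2$ directly, while the paper argues via monotonicity in $x$ and evaluates at $x=\tfrac12$.
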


\begin{proof}
Let $g(x,y)=(1+\frac{1}{y})^{x-1}-p^x$ where $\frac{1}{2}\leq p\leq1$, $0\leq x\leq\frac{1}{2}$ and $0<y\leq\frac{1}{k}$ with $k\geq1$.
Then $\frac{\partial g}{\partial x}=(1+\frac{1}{y})^{x-1}ln(1+\frac{1}{y})-p^xlnp>0$ as $1+\frac{1}{y}\geq2$. Therefore, $g(x,y)$ is an increasing function of $x$, i.e, $g(x,y)\leq g(\frac{1}{2},y)=(1+\frac{1}{y})^{-\frac{1}{2}}-p^{\frac{1}{2}}\leq0$ as $0<(1+\frac{1}{y})^{-1}\leq\frac{1}{2}$. Let $f(x,y)=(1+y)^x-(py)^x$ with $\frac{1}{2}\leq p\leq1$, $0\leq x\leq\frac{1}{2}$ and $0<y\leq\frac{1}{k}$. Since $(1+\frac{1}{y})^{x-1}-p^x\leq0$, then $\frac{\partial f}{\partial y}=xy^{x-1}[(1+\frac{1}{y})^{x-1}-p^x]\leq0$. Thus, $f(x,y)$ is a decreasing function of $y$, i.e, $f(x,y)\geq f(x,\frac{1}{k})=\frac{(1+k)^x-p^x}{k^x}$. Set $y=\frac{1}{t}$, $t\geq k$, we get $f(x,\frac{1}{t})=\frac{(1+t)^x-p^x}{t^x}$. Thus, $(1+t)^x\geq p^x+\frac{(1+k)^x-p^x}{k^x}t^x$.
\end{proof}

\begin{thm}\label{thm:1}
For any $2\otimes 2\otimes2^{n-2}$ tripartite state $\rho_{A_1A_2A_3}\in H_{A_1}\otimes H_{A_2}\otimes H_{A_3}$ and $k\geq1$, $\frac{1}{2}\leq p\leq1$, $0\leq\alpha\leq\frac{r}{2}$, $r\geq2$,

(1) if $C^r(\rho_{A_1A_3})\geq kC^r(\rho_{A_1A_2})$, the concurrence satisfies
\begin{eqnarray}
C^\alpha(\rho_{A_1|A_2A_3})\geq p^{\frac{\alpha}{r}}C^\alpha(\rho_{A_1A_2})+\frac{(1+k)^{\frac{\alpha}{r}}-p^{\frac{\alpha}{r}}}{k^{\frac{\alpha}{r}}}C^\alpha(\rho_{A_1A_3}).
\end{eqnarray}

(2) if $C^r(\rho_{A_1A_2})\geq kC^r(\rho_{A_1A_3})$, the concurrence satisfies
\begin{eqnarray}
C^\alpha(\rho_{A_1|A_2A_3})\geq p^{\frac{\alpha}{r}}C^\alpha(\rho_{A_1A_3})+\frac{(1+k)^{\frac{\alpha}{r}}-p^{\frac{\alpha}{r}}}{k^{\frac{\alpha}{r}}}C^\alpha(\rho_{A_1A_2}).
\end{eqnarray}
\end{thm}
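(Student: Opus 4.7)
The plan is to reduce the claim to a direct application of Lemma~\ref{lemma:1}. The starting point is the $r$th-power monogamy inequality at exponent $r\geq 2$, which for $2\otimes 2\otimes 2^{n-2}$ tripartite states reads
\begin{eqnarray*}
C^r(\rho_{A_1|A_2A_3})\geq C^r(\rho_{A_1A_2})+C^r(\rho_{A_1A_3}).
\end{eqnarray*}
This follows from the known extension of the CKW inequality to $2\otimes 2\otimes d$ systems at $r=2$, combined with $(a+b)^{r/2}\geq a^{r/2}+b^{r/2}$ for $r\geq 2$ (so that (4) specialises correctly to the qubit--qubit--qudit cut).

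Next I would raise both sides to the power $\alpha/r$, noting that $\alpha/r\in[0,\tfrac12]$, to obtain
\begin{eqnarray*}
C^\alpha(\rho_{A_1|A_2A_3})\geq \bigl[C^r(\rho_{A_1A_2})+C^r(\rho_{A_1A_3})\bigr]^{\alpha/r}.
\end{eqnarray*}
Under the hypothesis of part~(1), assume first that $C^r(\rho_{A_1A_2})>0$ and factor it out, rewriting the right-hand side as $C^\alpha(\rho_{A_1A_2})\,(1+t)^{\alpha/r}$ where $t:=C^r(\rho_{A_1A_3})/C^r(\rho_{A_1A_2})\geq k$. Now apply Lemma~\ref{lemma:1} with $x=\alpha/r$ to bound $(1+t)^{\alpha/r}$ from below, and then multiply through by $C^\alpha(\rho_{A_1A_2})$. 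Using the identity $C^\alpha(\rho_{A_1A_2})\,t^{\alpha/r}=C^\alpha(\rho_{A_1A_3})$ yields exactly inequality~(6).

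The degenerate case $C^r(\rho_{A_1A_2})=0$ must be handled separately: the monogamy already gives $C^\alpha(\rho_{A_1|A_2A_3})\geq C^\alpha(\rho_{A_1A_3})$, so it suffices to verify that the coefficient $[(1+k)^{\alpha/r}-p^{\alpha/r}]/k^{\alpha/r}$ is at most $1$, which is a short numerical estimate from the subadditivity $(1+k)^x\leq 1+k^x$ together with $p^x\geq 2^{-1/2}>2^x-1$ on the parameter ranges $x\in[0,\tfrac12]$, $p\in[\tfrac12,1]$. Part~(2) is symmetric: interchange the roles of $A_2$ and $A_3$ and repeat the same argument verbatim.

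The main obstacle I anticipate is justifying the $r$th-power monogamy in the asymmetric $2\otimes 2\otimes 2^{n-2}$ geometry where $A_3$ is high-dimensional, since inequality~(4) is stated only for $n$-qubit states; this is precisely why the hypothesis on the dimensions is framed as $2\otimes 2\otimes 2^{n-2}$, so that the Ou--Fan-type extension of CKW is in force. Once that ingredient is granted, the remainder of the proof is bookkeeping around Lemma~\ref{lemma:1}.
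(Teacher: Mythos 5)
Your proposal is correct and follows essentially the same route as the paper: both start from the $r$th-power monogamy inequality of Ref.~[6] for $2\otimes2\otimes2^{n-2}$ states, write the right-hand side as $C^{\alpha}(\rho_{A_1A_2})(1+t)^{\alpha/r}$ with $t=C^r(\rho_{A_1A_3})/C^r(\rho_{A_1A_2})\geq k$, and invoke Lemma~\ref{lemma:1} with $x=\alpha/r$. The only substantive difference is that you spell out the degenerate case, which the paper dismisses as obvious; there your stated estimate does not quite close (subadditivity plus $p^x>2^x-1$ gives $(1+k)^x-p^x<2-2^x+k^x$, not $\leq k^x$), but the needed bound $[(1+k)^x-p^x]/k^x\leq1$ does hold because $k^x+p^x-(1+k)^x$ is increasing in $k$ and at $k=1$ equals $1+p^x-2^x\geq 1+2^{-1/2}-2^{1/2}>0$.
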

\begin{proof}
For any $2\otimes 2\otimes2^{n-2}$ tripartite state $\rho_{A_1A_2A_3}$ and $r\geq2$, one has that [6]:
\begin{eqnarray}
C^r(\rho_{A_1|A_2A_3})\geq C^r(\rho_{A_1A_2})+C^r(\rho_{A_1A_3}).
\end{eqnarray}
Obviously the inequality (6) holds if one of $C(\rho_{A_1A_3})$ or $C(\rho_{A_1A_2})$ vanishes.
Assuming $C^r(\rho_{A_1A_3})\geq kC^r(\rho_{A_1A_2})>0$, we have that
\begin{eqnarray}
C^{\alpha}(\rho_{A_1|A_2A_3})&\geq&(C^r(\rho_{A_1A_2})+C^r(\rho_{A_1A_3}))^x\nonumber\\
&=&C^{rx}(\rho_{A_1A_2})(1+\frac{C^r(\rho_{A_1A_3})}{C^r(\rho_{A_1A_2})})^x\nonumber\\
&\geq& C^{rx}(\rho_{A_1A_2})[p^x+\frac{(1+k)^x-p^x}{k^x}(\frac{C^r(\rho_{A_1A_3})}{C^r(\rho_{A_1A_2})})^x]\nonumber\\
&=&p^{\frac{\alpha}{r}}C^{\alpha}(\rho_{A_1A_2})+\frac{(1+k)^{\frac{\alpha}{r}}-p^{\frac{\alpha}{r}}}{k^{\frac{\alpha}{r}}}C^{\alpha}(\rho_{A_1A_3}),
\end{eqnarray}
where $\alpha=rx$, $0\leq\alpha\leq\frac{r}{2}$ as $0\leq x\leq\frac{1}{2}$, and the second inequality is due to (5). Using similar methods, we obtain the inequality (7).
\end{proof}

For simplicity, we denote $C(\rho_{A_1A_i})$ $(i=2,\cdots,n-1)$ by $C_{A_1A_i}$, $C(\rho_{A_1|A_{j+1}\cdots A_n})$ $(j=1,\cdots,n-1)$ by $C_{A_1|A_{j+1}\cdots A_n}$, and $l=\frac{(1+k)^{\frac{\alpha}{r}}-p^{\frac{\alpha}{r}}}{k^{\frac{\alpha}{r}}}$ ($k\geq1$, $\frac{1}{2}\leq p\leq1$, $0\leq\alpha\leq\frac{r}{2}$, $r\geq2$). The monogamy inequality of the $\alpha$th power of concurrence for $n$-qubit quantum states is given by the following theorem for $0\leq\alpha\leq\frac{r}{2}$ and $r\geq2$.
\begin{thm}\label{thm:2}
For any $n$-qubit quantum state $\rho_{A_1A_2A_3\cdots A_n}$ and $k\geq1$, $\frac{1}{2}\leq p\leq1$, $0\leq\alpha\leq\frac{r}{2}$, $r\geq2$, we have that

(1) if $kC_{A_1A_i}^r\leq C_{A_1|A_{i+1}\cdots A_n}^r$ for $i=2,\cdots,m$ and $C_{A_1A_j}^r\geq kC_{A_1|A_{j+1}\cdots A_n}^r$ for $j=m+1,\cdots,n-1$, $\forall$ $2\leq m\leq n-2$, $n\geq4$, we have
\begin{eqnarray}
C_{A_1|A_2A_3\cdots A_n}^\alpha&\geq&p^{\frac{\alpha}{r}}(C_{A_1A_2}^\alpha+lC_{A_1A_3}^\alpha+\cdots+l^{m-2}C_{A_1A_m}^\alpha)\nonumber\\
&+&l^m[C_{A_1A_{m+1}}^\alpha+p^{\frac{\alpha}{r}}C_{A_1A_{m+2}}^\alpha+\cdots+p^{\frac{(n-m-2)\alpha}{r}}C_{A_1A_{n-1}}^\alpha]\nonumber\\
&+&l^{m-1}p^{\frac{(n-m-1)\alpha}{r}}C_{A_1A_n}^\alpha.
\end{eqnarray}

(2) if $kC_{A_1A_i}^r\leq C_{A_1|A_{i+1}\cdots A_n}^r$ for $i=2,\cdots,n-1$ and $n\geq3$, we have
\begin{eqnarray}
C_{A_1|A_2A_3\cdots A_n}^\alpha\geq p^{\frac{\alpha}{r}}(C_{A_1A_2}^\alpha+lC_{A_1A_3}^\alpha+\cdots+l^{n-3}C_{A_1A_{n-1}}^\alpha)+l^{n-2}C_{A_1A_n}^\alpha.
\end{eqnarray}

(3) if $C_{A_1A_j}^r\geq kC_{A_1|A_{j+1}\cdots A_n}^r$ for $j=2,\cdots,n-1$ and $n\geq3$, we have
\begin{eqnarray}
C_{A_1|A_2A_3\cdots A_n}^\alpha\geq l(C_{A_1A_2}^\alpha+p^{\frac{\alpha}{r}}C_{A_1A_3}^\alpha+\cdots+p^{\frac{(n-3)\alpha}{r}}C_{A_1A_{n-1}}^\alpha)+p^{\frac{(n-2)\alpha}{r}}C_{A_1A_n}^\alpha.
\end{eqnarray}
\end{thm}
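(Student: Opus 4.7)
The plan is to prove all three parts by iterating Theorem~\ref{thm:1} along the chain of bipartitions $A_1\,|\,A_iA_{i+1}\cdots A_n$ for $i=2,3,\ldots,n-1$. At each stage the state $\rho_{A_1|A_i|A_{i+1}\cdots A_n}$ is tripartite of type $2\otimes 2\otimes 2^{n-i}$, so Theorem~\ref{thm:1} applies directly with the role of $A_3$ played by the coarsened block $A_{i+1}\cdots A_n$. No new analytic inequality is needed; the argument is a controlled telescoping.

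For part (2), the hypothesis $C^r_{A_1|A_{i+1}\cdots A_n}\geq k\,C^r_{A_1A_i}$ holds at every step, so I would apply Theorem~\ref{thm:1} part (1) uniformly. The first application yields
\begin{equation*}
C_{A_1|A_2\cdots A_n}^{\alpha}\geq p^{\frac{\alpha}{r}}\,C_{A_1A_2}^{\alpha}+l\,C_{A_1|A_3\cdots A_n}^{\alpha},
\end{equation*}
and inserting the analogous bounds for $C_{A_1|A_3\cdots A_n}^{\alpha},\ldots,C_{A_1|A_{n-1}A_n}^{\alpha}$ in succession produces the geometric-style sum (10). Part (3) is entirely parallel: under the opposite regime the second inequality of Theorem~\ref{thm:1} applies at every step, so $l$ now attaches to the two-qubit marginal $C_{A_1A_j}^{\alpha}$ and $p^{\alpha/r}$ propagates onto the residual bipartite factor, giving (11) after $n-2$ iterations.

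Part (1) is the hybrid case and requires the most care. I would first run the phase-1 iteration for $i=2,\ldots,m$ exactly as in part (2), arriving at
\begin{equation*}
C_{A_1|A_2\cdots A_n}^{\alpha}\geq p^{\frac{\alpha}{r}}\sum_{i=2}^{m}l^{i-2}\,C_{A_1A_i}^{\alpha}+l^{m-1}\,C_{A_1|A_{m+1}\cdots A_n}^{\alpha},
\end{equation*}
at which point the regime flips. From $j=m+1$ onward I would apply Theorem~\ref{thm:1} part (2) to the residual factor, which telescopes to
\begin{equation*}
C_{A_1|A_{m+1}\cdots A_n}^{\alpha}\geq l\sum_{j=m+1}^{n-1}p^{\frac{(j-m-1)\alpha}{r}}\,C_{A_1A_j}^{\alpha}+p^{\frac{(n-m-1)\alpha}{r}}\,C_{A_1A_n}^{\alpha}.
\end{equation*}
Substituting this into the phase-1 bound gives exactly the three-block decomposition~(9): the factor $l^{m-1}$ carried from phase~1 combines with the leading $l$ of phase~2 to produce the $l^m$ coefficient on the middle block, and the tail boundary term $l^{m-1}p^{(n-m-1)\alpha/r}C_{A_1A_n}^{\alpha}$ emerges naturally.

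The main obstacle is organizational rather than analytic: one must match each step of the chain to the correct variant of Theorem~\ref{thm:1} according to which side of the tripartite test is the larger one, and maintain consistent exponents on $l$ and $p^{\alpha/r}$ across the transition at index $m$. A minor but worth-checking point is that every intermediate state $\rho_{A_1|A_i|A_{i+1}\cdots A_n}$ genuinely lives in a Hilbert space of admissible $2\otimes 2\otimes 2^{n-i}$ shape, which is immediate from the qubit hypothesis on $A_1,\ldots,A_n$, so that Theorem~\ref{thm:1} legitimately applies at every stage of the induction.
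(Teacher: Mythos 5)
Your proposal is correct and follows essentially the same route as the paper: both iterate the two cases of Theorem~1 along the chain of bipartitions $A_1|A_iA_{i+1}\cdots A_n$, telescoping with part~(1) for $i=2,\ldots,m$ and with part~(2) for $j=m+1,\ldots,n-1$, then combine the two partial bounds so that $l^{m-1}\cdot l=l^m$ produces the middle-block coefficient. The only discrepancies are cosmetic (your equation numbers are shifted by one relative to the paper's labels).
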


\begin{proof}
For the $n$-qubit quantum state $\rho_{A_1A_2A_3\cdots A_n}$, using inequality (6) repeatedly if $kC_{A_1A_i}^r\leq C_{A_1|A_{i+1}\cdots A_n}^r$ for $i=2,\cdots,m$, we have
\begin{eqnarray}
C_{A_1|A_2A_3\cdots A_n}^\alpha&\geq& p^{\frac{\alpha}{r}}C_{A_1A_2}^\alpha+lC_{A_1|A_3\cdots A_n}^\alpha \nonumber\\
&\geq& p^{\frac{\alpha}{r}}C_{A_1A_2}^\alpha+lp^{\frac{\alpha}{r}}C_{A_1A_3}^\alpha+l^2C_{A_1|A_4\cdots A_n}^\alpha \nonumber\\
&\geq&\cdots\nonumber\\
&\geq& p^{\frac{\alpha}{r}}(C_{A_1A_2}^\alpha+lC_{A_1A_3}^\alpha+\cdots+l^{m-2}C_{A_1A_m}^\alpha)+l^{m-1}C_{A_1|A_{m+1}\cdots A_n}^\alpha.
\end{eqnarray}
 From the inequality (7) and $C_{A_1A_j}^r\geq kC_{A_1|A_{j+1}\cdots A_n}^r$ for $j=m+1,\cdots,n-1$, we get
\begin{eqnarray}
 C_{A_1|A_{m+1}\cdots A_n}^\alpha&\geq& lC_{A_1A_{m+1}}^\alpha+p^{\frac{\alpha}{r}}C_{A_1|A_{m+2}\cdots A_n}^\alpha \nonumber\\
&\geq&\cdots \nonumber\\
&\geq& l[C_{A_1A_{m+1}}^\alpha+p^{\frac{\alpha}{r}}C_{A_1A_{m+2}}^\alpha+\cdots+p^{\frac{(n-m-2)\alpha}{r}}C_{A_1A_{n-1}}^\alpha]
+p^{\frac{(n-m-1)\alpha}{r}}C_{A_1A_n}^\alpha.
\end{eqnarray}
Combining (13) and (14), one gets (10). If all  $kC_{A_1A_i}^r\leq C_{A_1|A_{i+1}\cdots A_n}^r$ for $i=2,\cdots,n-1$ or $C_{A_1A_j}^r\geq kC_{A_1|A_{j+1}\cdots A_n}^r$ for $j=2,\cdots,n-1$, we have the inequality (11) and (12).

\end{proof}

{\bf Remark 1.} We provide a class of lower bounds of the $\alpha$th power of concurrence with different $p$. For $\frac{1}{2}\leq p\leq1$, the lower bound of the $\alpha$th power of concurrence is tighter when $p$ is smaller. When $p=1$ and $C^r(\rho_{A_1A_3})\geq kC^r(\rho_{A_1A_2})$, we have $C^\alpha(\rho_{A_1|A_2A_3})\geq C^\alpha(\rho_{A_1A_2})+\frac{(1+k)^{\frac{\alpha}{r}}-1}{k^{\frac{\alpha}{r}}}C^\alpha(\rho_{A_1A_3})$ for $0\leq\alpha\leq\frac{r}{2}$ and $r\geq2$. Using concurrence as the quantum correlation measure, our Theorem 1 is a generalization of Theorem 1 in [15].

{\bf Remark 2.} Taking the tripartite quantum states as an example, when $C^r(\rho_{A_1A_3})\geq kC^r(\rho_{A_1A_2})$, one has that $C^\alpha(\rho_{A_1|A_2A_3})\geq C^\alpha(\rho_{A_1A_2})+\frac{(1+k)^{\frac{\alpha}{r}}-1}{k^{\frac{\alpha}{r}}}C^\alpha(\rho_{A_1A_3})=z_1$ in [15]. In Theorem 1, the $\alpha$th power of concurrence satisfies $C^\alpha(\rho_{A_1|A_2A_3})\geq p^{\frac{\alpha}{r}}C^\alpha(\rho_{A_1A_2})+\frac{(1+k)^{\frac{\alpha}{r}}-p^{\frac{\alpha}{r}}}{k^{\frac{\alpha}{r}}}C^\alpha(\rho_{A_1A_3})=z_2$. Let $z=z_2-z_1$, we have $z\geq0$ for $\frac{1}{2}\leq p\leq1$, $0\leq\alpha\leq\frac{r}{2}$ and $r\geq2$, then our results are seen tighter than that in [15].

\textit{\textbf{Example 1.}} Consider the qubit 3-state $\rho=|\varphi\rangle\langle\varphi|$, where $|\varphi\rangle$ is in the generalized Schmidt decomposition form [22]:
\begin{eqnarray}
|\varphi\rangle=\lambda_0|000\rangle+\lambda_1e^{i\theta}|100\rangle+\lambda_2|101\rangle+\lambda_3|110\rangle+\lambda_4|111\rangle,
\end{eqnarray}
with $0\leq\theta\leq\pi$, $\lambda_i\geq0$, $i=0,\cdots,4$ and $\sum_{i=0}^4\lambda_i^2=1$. We have $C(\rho_{A_1|A_2A_3})=2\lambda_0\sqrt{\lambda_2^2+\lambda_3^2+\lambda_4^2}$, $C(\rho_{A_1A_2})=2\lambda_0\lambda_2$, and $C(\rho_{A_1A_3})=2\lambda_0\lambda_3$.
Setting $\lambda_0=\lambda_2=\frac{1}{2}$, $\lambda_3=\frac{\sqrt{2}}{2}$, and $\lambda_1=\lambda_4=0$, then $C(\rho_{A_1|A_2A_3})=\frac{\sqrt{3}}{2}$, $C(\rho_{A_1A_2})=\frac{1}{2}$, $C(\rho_{A_1A_3})=\frac{\sqrt{2}}{2}$. Thus, $C^\alpha(\rho_{A_1|A_2A_3})=(\frac{\sqrt{3}}{2})^\alpha$. Let $p=\frac{1}{2}$ and $k=\sqrt{2}$. By Theorem 1, we have $C^\alpha(\rho_{A_1|A_2A_3})\geq (\frac{1}{2})^{\frac{\alpha}{r}}(\frac{1}{2})^\alpha+\frac{(1+\sqrt{2})^{\frac{\alpha}{r}}-(\frac{1}{2})^{\frac{\alpha}{r}}}{(\sqrt{2})^{\frac{\alpha}{r}}}(\frac{\sqrt{2}}{2})^\alpha=z_1$. The result of [15] says that that $C^\alpha(\rho_{A_1|A_2A_3})\geq (\frac{1}{2})^\alpha+\frac{(1+\sqrt{2})^{\frac{\alpha}{r}}-1}{(\sqrt{2})^{\frac{\alpha}{r}}}(\frac{\sqrt{2}}{2})^\alpha=z_2$.
One then sees that Theorem 1 gives a better result than that of Ref. [15] from Fig. 1.
In fact, let $z'=z_1-z_2=((\frac{1}{2})^{\frac{\alpha}{r}}-1)(\frac{1}{2})^\alpha+\frac{1-(\frac{1}{2})^{\frac{\alpha}{r}}}{(\sqrt{2})^{\frac{\alpha}{r}}}(\frac{\sqrt{2}}{2})^\alpha$. For $0\leq\alpha\leq1$ and $r\geq2$, we have $z'\geq0$, which is shown in Fig. 2.
\begin{figure}[!htb]
\centerline{\includegraphics[width=0.6\textwidth]{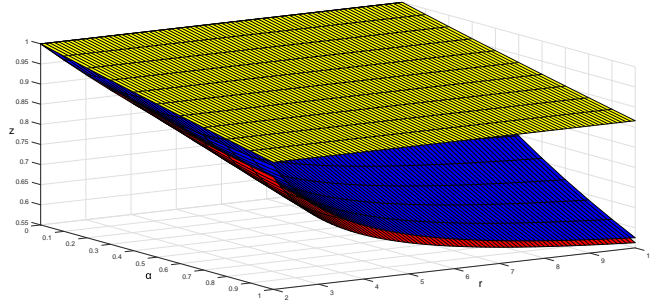}}
\renewcommand{\figurename}{Fig.}
\caption{The axis $z$ is the lower bounds of concurrence or concurrence itself of $|\varphi\rangle$. The yellow surface is the concurrence of the quantum state $|\varphi\rangle$. The lower bound of the concurrence in Ref. [15] is depicted in the red surface and the blue one is our result from Theorem 1.}
\end{figure}

\begin{figure}[!htb]
\centerline{\includegraphics[width=0.6\textwidth]{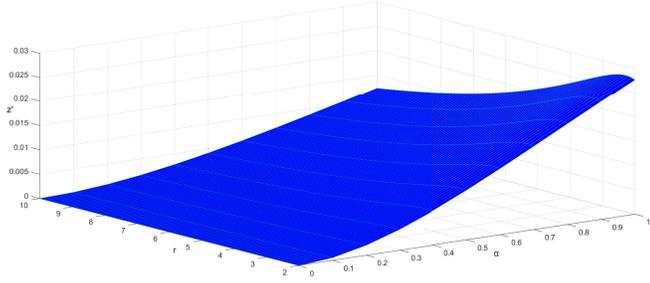}}
\renewcommand{\figurename}{Fig.}
\caption{The blue surface is the difference value $z'$ between the lower bound of concurrence $z_1$ from Theorem 1 and that of in Ref. [15].}
\end{figure}

Let $\tau$ be a bipartite quantum correlation measure. It is known that the measure satisfies [15, 23]:
\begin{eqnarray}
\tau(\rho_{A_1|A_2A_3\cdots A_n})\geq\tau(\rho_{A_1A_2})+\tau(\rho_{A_1A_3})+\cdots+\tau(\rho_{A_1A_n}),
\end{eqnarray}
where $\rho_{A_1|A_2A_3\cdots A_n}$ is the bipartite state under bipartite partition, and $\rho_{A_1A_i}$, $i=2,\cdots,n$, are the reduced density matrix of $\rho_{A_1A_2A_3\cdots A_n}$, i.e, $\rho_{A_1A_i}=tr_{A_2\cdots A_{i-1}A_{i+1}\cdots A_n}(\rho_{A_1A_2A_3\cdots A_n})$.

In addition to the squared concurrence, there are quantum correlation measures that also satisfy the monogamy relation, such as entanglement of formation and squared convex-roof extended negativity. Let $\tau$ be one of these quantum correlation measures, we have the following results:

\begin{corollary}
For any $n$-partite quantum state $\rho_{A_1A_2A_3\cdots A_n}$ and $k\geq1$, $\frac{1}{2}\leq p\leq1$, $0\leq\alpha\leq\frac{r}{2}$, $r\geq2$, we have that

(1) if $k\tau_{A_1A_i}^r\leq \tau_{A_1|A_{i+1}\cdots A_n}^r$ for $i=2,\cdots,m$ and $\tau_{A_1A_j}^r\geq k\tau_{A_1|A_{j+1}\cdots A_n}^r$ for $j=m+1,\cdots,n-1$, $\forall$ $2\leq m\leq n-2$, $n\geq4$, we have
\begin{eqnarray}
\tau_{A_1|A_2A_3\cdots A_n}^\alpha&\geq& p^{\frac{\alpha}{r}}(\tau_{A_1A_2}^\alpha+l\tau_{A_1A_3}^\alpha+\cdots+l^{m-2}\tau_{A_1A_m}^\alpha)\nonumber\\
&+&l^m[\tau_{A_1A_{m+1}}^\alpha+p^{\frac{\alpha}{r}}\tau_{A_1A_{m+2}}^\alpha+\cdots+p^{\frac{(n-m-2)\alpha}{r}}\tau_{A_1A_{n-1}}^\alpha]\nonumber\\
&+&l^{m-1}p^{\frac{(n-m-1)\alpha}{r}}\tau_{A_1A_n}^\alpha.
\end{eqnarray}

(2) if $k\tau_{A_1A_i}^r\leq \tau_{A_1|A_{i+1}\cdots A_n}^r$ for $i=2,\cdots,n-1$ and $n\geq3$, we have
\begin{eqnarray}
\tau_{A_1|A_2A_3\cdots A_n}^\alpha\geq p^{\frac{\alpha}{r}}(\tau_{A_1A_2}^\alpha+l\tau_{A_1A_3}^\alpha+\cdots+l^{n-3}\tau_{A_1A_{n-1}}^\alpha)+l^{n-2}\tau_{A_1A_n}^\alpha.
\end{eqnarray}

(3) if $\tau_{A_1A_j}^r\geq k\tau_{A_1|A_{j+1}\cdots A_n}^r$ for $j=2,\cdots,n-1$ and $n\geq3$, we have
\begin{eqnarray}
\tau_{A_1|A_2A_3\cdots A_n}^\alpha\geq l(\tau_{A_1A_2}^\alpha+p^{\frac{\alpha}{r}}\tau_{A_1A_3}^\alpha+\cdots+p^{\frac{(n-3)\alpha}{r}}\tau_{A_1A_{n-1}}^\alpha)+p^{\frac{(n-2)\alpha}{r}}\tau_{A_1A_n}^\alpha.
\end{eqnarray}

\end{corollary}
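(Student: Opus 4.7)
The plan is to transcribe the proof of Theorem~2 with $\tau$ in place of $C$, since the algebraic engine of that argument is Lemma~1, which is oblivious to which quantum correlation measure sits in the exponent. The only input that needs to be adapted is the $r$th-power version of (16), namely $\tau^r_{A_1|A_2\cdots A_n}\geq\sum_{i=2}^{n}\tau^r_{A_1A_i}$, which is known to hold for the measures singled out in the paragraph before the corollary (squared concurrence, squared convex-roof extended negativity, entanglement of formation) whenever $r\geq 2$. Once this monogamy is in hand, the proof of the corollary requires no new analytic idea.

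First I would establish the tripartite analogue of Theorem~1 for $\tau$. Starting from $\tau^r_{A_1|A_2A_3}\geq\tau^r_{A_1A_2}+\tau^r_{A_1A_3}$, I factor out the smaller right-hand-side term (say $\tau^r_{A_1A_2}$ in the regime $\tau^r_{A_1A_3}\geq k\tau^r_{A_1A_2}$), raise to the power $x=\alpha/r\in[0,\tfrac12]$, and apply Lemma~1 with $t=\tau^r_{A_1A_3}/\tau^r_{A_1A_2}\geq k$. Rewriting $rx=\alpha$ and $x=\alpha/r$ yields the tripartite bound
\begin{eqnarray}
\tau^\alpha_{A_1|A_2A_3}\geq p^{\frac{\alpha}{r}}\tau^\alpha_{A_1A_2}+l\,\tau^\alpha_{A_1A_3},
\end{eqnarray}
together with the swapped version in the opposite regime, exactly parallel to Theorem~1.

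Next I would iterate this tripartite inequality to build the $n$-partite bounds. In case~(2), where $k\tau^r_{A_1A_i}\leq\tau^r_{A_1|A_{i+1}\cdots A_n}$ holds for every $i=2,\ldots,n-1$, I peel off one pair at a time: each step turns the residual $\tau^\alpha_{A_1|A_i\cdots A_n}$ into $p^{\alpha/r}\tau^\alpha_{A_1A_i}+l\,\tau^\alpha_{A_1|A_{i+1}\cdots A_n}$, and the cumulative effect of $n-2$ such peels gives (18). Case~(3) is dual: the opposite chain of hypotheses swaps the roles of the coefficients $p^{\alpha/r}$ and $l$ at every step, producing (19). Case~(1) concatenates the two: I apply case~(2)-style peeling for $i=2,\ldots,m$ to reach a residual $\tau^\alpha_{A_1|A_{m+1}\cdots A_n}$ multiplied by $l^{m-1}$, and then bound that residual by case~(3)-style peeling on the suffix $A_{m+1},\ldots,A_n$, giving (17) after combining.

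The only obstacle is purely notational: the nested factors $p^{\alpha/r}$ and $l$ accumulated through the iteration must align precisely with the exponents displayed in (17)--(19). This is the same bookkeeping already carried out in the proof of Theorem~2, so once the substitution $C\to\tau$ is made and the $r$th-power monogamy of $\tau$ is invoked at each peeling step, the three inequalities follow immediately.
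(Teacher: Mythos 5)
Your proposal is correct and follows essentially the same route as the paper, which gives no separate proof for this corollary but relies on exactly the substitution you describe: the monogamy relation (16) for $\tau$ (raised to the $r$th power) replaces inequality (8), and then Lemma~1 and the iterative peeling of Theorem~2 go through verbatim with $C$ replaced by $\tau$.
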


\section{Tighter polygamy relations of quantum correlations}
In this section, we derive the polygamy inequalities for multipartite quantum systems. We first give the polygamy inequalities of the $\beta$th power of concurrence for $\beta\leq0$. Next the polygamy inequalities of $\beta$th power of negativity for $\beta\geq s$ and $0<s\leq1$ are obtained. We also generalize the results to general quantum correlation measures.

\begin{lemma}\label{lemma:2}
For $0<q\leq1$, $k\geq1$ and $t\geq k$,
if $x\geq1$, or $x\leq0$, we have
\begin{eqnarray}
(1+t)^x\leq q^x+\frac{(1+k)^x-q^x}{k^x}t^x.
\end{eqnarray}
\end{lemma}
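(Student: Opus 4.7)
The plan is to mirror the method used for Lemma \ref{lemma:1}, producing a reverse-direction argument. Setting $y = 1/t$ (so $0 < y \leq 1/k$ because $t \geq k$) and multiplying the desired inequality by the positive quantity $t^{-x}$, the claim reduces to
$$(1+y)^x - (qy)^x \leq \frac{(1+k)^x - q^x}{k^x}.$$
Notice that the right-hand side is exactly the value at $y = 1/k$ of the auxiliary function $f(x,y) = (1+y)^x - (qy)^x$, since $(1+1/k)^x - (q/k)^x = \frac{(1+k)^x - q^x}{k^x}$. Therefore it suffices to show that $f(x,\,\cdot\,)$ is nondecreasing on $(0, 1/k]$ for the two parameter ranges $x \geq 1$ and $x \leq 0$.

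Next I would compute the partial derivative
$$\frac{\partial f}{\partial y} = x(1+y)^{x-1} - x q^{x} y^{x-1} = x\, y^{x-1}\!\left[\left(1+\tfrac{1}{y}\right)^{x-1} - q^x\right],$$
and denote $h(x,y) = (1+1/y)^{x-1} - q^x$, which is the natural analogue of the function $g$ used in the proof of Lemma \ref{lemma:1}. The monotonicity of $f$ in $y$ will follow once I establish that $x \cdot h(x,y) \geq 0$ on the stated parameter range, noting that $y^{x-1} > 0$.

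The main step, and the place where the two cases must be separated carefully, is verifying the sign of $x \cdot h(x,y)$. Since $y \leq 1/k \leq 1$ we have $1 + 1/y \geq 2$. For $x \geq 1$ I would use $(1+1/y)^{x-1} \geq 2^{x-1} \geq 1 \geq q^x$ (the last step uses $0 < q \leq 1$), hence $h(x,y) \geq 0$ and $x\, h(x,y) \geq 0$. For $x \leq 0$ the exponent $x-1 < 0$, so $(1+1/y)^{x-1} \leq 2^{x-1} \leq 1 \leq q^x$ (again using $0 < q \leq 1$ but now with a non-positive exponent), giving $h(x,y) \leq 0$, and multiplication by $x \leq 0$ again produces $x\, h(x,y) \geq 0$. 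This is the only delicate point; both boundary values $x=0$ and $x=1$ yield equality, which is a useful consistency check.

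Having established $\partial f/\partial y \geq 0$ on $(0, 1/k]$, the monotonicity gives $f(x,y) \leq f(x, 1/k)$, which is precisely the reduced inequality above. Substituting back $y = 1/t$ and multiplying by $t^x > 0$ recovers
$$(1+t)^x \leq q^x + \frac{(1+k)^x - q^x}{k^x}\, t^x,$$
completing the argument. I expect no significant obstacle beyond the bookkeeping in the sign analysis, since the strategy is structurally the dual of Lemma \ref{lemma:1}.
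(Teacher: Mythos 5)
Your argument is correct and follows essentially the same route as the paper's proof: the same auxiliary function $f(x,y)=(1+y)^x-(qy)^x$ on $0<y\le 1/k$, the same monotonicity-in-$y$ computation via $\frac{\partial f}{\partial y}=xy^{x-1}\bigl[(1+\tfrac{1}{y})^{x-1}-q^x\bigr]$, and the same substitution $y=1/t$. The only difference is that you spell out the sign analysis for the case $x\le 0$, which the paper dispatches with ``can be verified similarly.''
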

\begin{proof}
Let $f(x,y)=(1+y)^x-(qy)^x$ with $0<q\leq1$, $x\geq1$ and $0<y\leq\frac{1}{k}$. Then $\frac{\partial f}{\partial y}=xy^{x-1}[(1+\frac{1}{y})^{x-1}-q^x]>0$. Thus, $f(x,y)$ is an increasing function of $y$, i.e, $f(x,y)\leq f(x,\frac{1}{k})=\frac{(1+k)^x-q^x}{k^x}$. Set $y=\frac{1}{t}$, $t\geq k$, we get $f(x,\frac{1}{t})=\frac{(1+t)^x-q^x}{t^x}$. Thus, $(1+t)^x\leq q^x+\frac{(1+k)^x-q^x}{k^x}t^x$. The case of
$x\leq0$ can be verified similarly as above. 
\end{proof}

Using Lemma 2, we obtain the following results:
\begin{thm}\label{thm:3}
For any $2\otimes 2\otimes2^{n-2}$ tripartite state $\rho_{ABC}\in H_A\otimes H_B\otimes H_C$ and $k\geq1$, $0<q\leq1$, $\beta\leq0$, $s\geq2$,

(1) if $C^s(\rho_{A_1A_3})\geq kC^s(\rho_{A_1A_2})$, the concurrence satisfies
\begin{eqnarray}
C^\beta(\rho_{A_1|A_2A_3})\leq q^{\frac{\beta}{s}}C^\beta(\rho_{A_1A_2})+\frac{(1+k)^{\frac{\beta}{s}}-q^{\frac{\beta}{s}}}{k^{\frac{\beta}{s}}}C^\beta(\rho_{A_1A_3}).
\end{eqnarray}

(2) if $C^s(\rho_{A_1A_2})\geq kC^s(\rho_{A_1A_3})$, the concurrence satisfies
\begin{eqnarray}
C^\beta(\rho_{A_1|A_2A_3})\leq q^{\frac{\beta}{s}}C^\beta(\rho_{A_1A_3})+\frac{(1+k)^{\frac{\beta}{s}}-q^{\frac{\beta}{s}}}{k^{\frac{\beta}{s}}}C^\beta(\rho_{A_1A_2}).
\end{eqnarray}
\end{thm}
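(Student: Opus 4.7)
The plan is to parallel the proof of Theorem 1, substituting Lemma 2 for Lemma 1 and tracking the reversal of inequality that occurs when raising a nonnegative quantity to a nonpositive power. The starting point is again the established monogamy inequality $C^s(\rho_{A_1|A_2A_3}) \geq C^s(\rho_{A_1A_2}) + C^s(\rho_{A_1A_3})$ valid for $s\geq 2$ (this is the same bound (8) cited inside the proof of Theorem 1). Setting $x=\beta/s\leq 0$, I would raise both sides to the $x$-th power; because $u\mapsto u^x$ is monotonically decreasing on $(0,\infty)$ when $x\leq 0$, the inequality reverses and yields
\begin{eqnarray*}
C^\beta(\rho_{A_1|A_2A_3}) \leq \bigl[C^s(\rho_{A_1A_2})+C^s(\rho_{A_1A_3})\bigr]^{\beta/s}.
\end{eqnarray*}

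For part (1), under the hypothesis $C^s(\rho_{A_1A_3})\geq k\,C^s(\rho_{A_1A_2})>0$ (the degenerate case where one concurrence vanishes makes a $C^\beta$ term infinite, and one simply declares (21) vacuously satisfied), I would factor out $C^s(\rho_{A_1A_2})^{x}=C^\beta(\rho_{A_1A_2})$ to obtain
\begin{eqnarray*}
C^\beta(\rho_{A_1|A_2A_3}) \leq C^\beta(\rho_{A_1A_2})\left(1+\frac{C^s(\rho_{A_1A_3})}{C^s(\rho_{A_1A_2})}\right)^{\beta/s}.
\end{eqnarray*}
With $t:=C^s(\rho_{A_1A_3})/C^s(\rho_{A_1A_2})\geq k$ and $x=\beta/s\leq 0$, the hypotheses of Lemma 2 are satisfied (the lemma explicitly covers the $x\leq 0$ branch), so
\begin{eqnarray*}
(1+t)^{\beta/s} \leq q^{\beta/s} + \frac{(1+k)^{\beta/s}-q^{\beta/s}}{k^{\beta/s}}\, t^{\beta/s}.
\end{eqnarray*}
Plugging this in and recognizing that $C^\beta(\rho_{A_1A_2})\cdot t^{\beta/s} = C^\beta(\rho_{A_1A_3})$ collapses the right-hand side to exactly (21).

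Part (2) is obtained by the symmetric argument with the roles of $A_2$ and $A_3$ exchanged, since the monogamy bound (8) is symmetric in the two reduced concurrences. The main obstacle is not any serious computation but rather the bookkeeping around signs: one must carefully verify that each use of $u\mapsto u^{\beta/s}$ flips rather than preserves the inequality, and that Lemma 2 has been designed with exactly the right endpoints ($x\leq 0$, $t\geq k$, $0<q\leq 1$) to package this reversal so that the final substitution is completely mechanical. Once the role of Lemma 2 is in place, the chain of manipulations mirrors Theorem 1 step for step with each $\geq$ replaced by $\leq$.
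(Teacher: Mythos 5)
Your proposal is correct and follows essentially the same route as the paper's own proof: start from the $s\geq 2$ monogamy bound, raise to the power $x=\beta/s\leq 0$ (reversing the inequality), factor out $C^\beta(\rho_{A_1A_2})$, and apply the $x\leq 0$ branch of Lemma 2, with the symmetric argument for part (2). The handling of the degenerate vanishing-concurrence case matches the paper's (equally brief) dismissal of that case as trivial.
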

\begin{proof}
 Recalling [6], we have that $C^s(\rho_{A_1|A_2A_3})\geq C^s(\rho_{A_1A_2})+C^s(\rho_{A_1A_3})$ for arbitrary $2\otimes 2\otimes2^{n-2}$ tripartite state $\rho_{A_1A_2A_3}$ and $s\geq2$. The inequalities (21) are trivial when either  $C(\rho_{A_1A_3})=0$ or $C(\rho_{A_1A_2})=0$.
 Now assuming $C^s(\rho_{A_1A_3})\geq kC^s(\rho_{A_1A_2})>0$ and noting that $\beta=sx\leq0$ for some $x\leq0$, we have that
\begin{eqnarray}
C^{\beta}(\rho_{A_1|A_2A_3})&\leq&(C^s(\rho_{A_1A_2})+C^s(\rho_{A_1A_3}))^x\nonumber\\
&=&C^{sx}(\rho_{A_1A_2})(1+\frac{C^s(\rho_{A_1A_3})}{C^s(\rho_{A_1A_2})})^x\nonumber\\
&\leq& C^{sx}(\rho_{A_1A_2})[q^x+\frac{(1+k)^x-q^x}{k^x}(\frac{C^s(\rho_{A_1A_3})}{C^s(\rho_{A_1A_2})})^x]\nonumber\\
&=&q^{\frac{\beta}{s}}C^{\beta}(\rho_{A_1A_2})+\frac{(1+k)^{\frac{\beta}{s}}-q^{\frac{\beta}{s}}}{k^{\frac{\beta}{s}}}C^{\beta}(\rho_{A_1A_3}),
\end{eqnarray}
where the first inequality is due to $C^s(\rho_{A_1|A_2A_3})\geq C^s(\rho_{A_1A_2})+C^s(\rho_{A_1A_3})$ for $s\geq2$ and $x\leq0$. Using Lemma 2, we get the second inequality. Similar proof gives the inequality (22) by using Lemma 2.

\end{proof}

{\bf Remark 3.} For $0<q\leq1$, the lower bound of the $\beta$th power of concurrence with $\beta\leq0$ is tighter when $p$ is smaller.

\textit{\textbf{Example 2.}}  Let us consider again the quantum state (15). Let $\lambda_0=\lambda_3=\frac{1}{2}$, $\lambda_1=\lambda_2=\lambda_4=\frac{\sqrt{6}}{6}$, $q=\frac{1}{2}$ and $k=\frac{\sqrt{6}}{2}$, we get $C(\rho_{A_1|A_2A_3})=\frac{\sqrt{21}}{6}$, $C(\rho_{A_1A_2})=\frac{\sqrt{6}}{6}$, $C(\rho_{A_1A_3})=\frac{1}{2}$. Thus, $C^\beta(\rho_{A_1|A_2A_3})=(\frac{\sqrt{21}}{6})^\beta$. By Theorem 3, the lower bounds of $C^\beta(\rho_{A_1|A_2A_3})$ is $z_1=(\frac{1}{2})^{\frac{\alpha}{r}}(\frac{\sqrt{6}}{6})^\alpha+\frac{(1+\frac{\sqrt{6}}{2})^{\frac{\alpha}{r}}-(\frac{1}{2})^{\frac{\alpha}{r}}}{(\frac{\sqrt{6}}{2})^{\frac{\alpha}{r}}}(\frac{1}{2})^\alpha$. From Ref. [7], we have $z_2=\frac{1}{2}((\frac{\sqrt{6}}{6})^\alpha+(\frac{1}{2})^\alpha)$.
For $\alpha\leq0$ and $2\leq r\leq5$, one can see from Fig. 3 that Theorem 3 give stronger result than that of Ref. [7]. 
\begin{figure}[!htb]
\centerline{\includegraphics[width=0.6\textwidth]{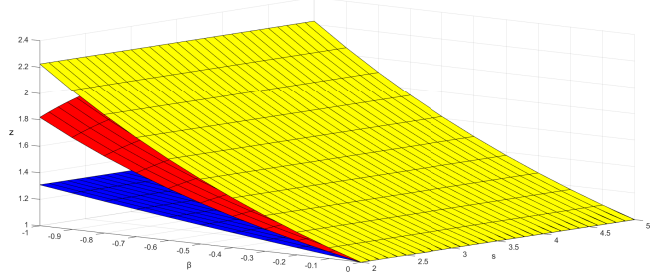}}
\renewcommand{\figurename}{Fig.}
\caption{The axis $z$ is the upper bounds of concurrence or concurrence itself of $|\varphi\rangle$. The blue surface is the concurrence of quantum state $|\varphi\rangle$. The lower bound of concurrence in Ref. [7] is the yellow surface and the red surface is our result from Theorem 3.}
\end{figure}

Let $H_{A_1}$ and $H_{A_2}$ be Hilbert spaces. The negativity of a quantum state $\rho_{A_1A_2}\in H_{A_1}\otimes H_{A_2}$ is defined by [24, 25],
\begin{eqnarray}
\mathcal{N}(\rho_{A_1A_2})=\|\rho_{A_1A_2}^{T_{A_1}}\|_{tr}-1,
\end{eqnarray}
where $\rho_{A_1A_2}^{T_{A_1}}$ denotes the partial transpose of $\rho_{A_1A_2}$ with respect to the subsystem $A_1$, and $\|\cdot\|_{tr}$ denotes the trace norm. The trace norm is defined as the sum of the singular values of the matrix $M\in\mathbb{R}^{m\times n}$, i.e., $\|M\|_{tr}=\sum_i\sigma_i=tr\sqrt{A^\dag A}$, where $\sigma_i$, $i=1,\cdots,min(m,n)$, are the singular values of the matrix $A$ arranged in descending order.

For any bipartite quantum pure state $|\varphi\rangle_{A_1A_2}$ with Schmidt rank $d$, $|\varphi\rangle_{A_1A_2}=\sum_{i=1}^d\sqrt{\lambda_i}|ii\rangle$, its negativity is given by [26],
\begin{eqnarray}
\mathcal{N}(|\varphi\rangle_{A_1A_2})=2\sum_{i<j}\lambda_i\lambda_j=(tr\sqrt{\rho_{A_1}})^2-1,
\end{eqnarray}
where $\lambda_i, i=1,\cdots,d$ are the eigenvalues for the reduced density matrix $\rho_{A_1}$ and $\rho_{A_1}=tr_{A_2}(|\varphi\rangle_{A_1A_2})$.

For a mixed bipartite quantum state $\rho_{A_1A_2}=\sum_ip_i|\varphi_i\rangle\langle\varphi_i|\in H_{A_1}\otimes H_{A_2}$, the negativity is given by the convex roof [27],
\begin{eqnarray}
\mathcal{N}(\rho_{A_1A_2})=min_{\{p_i,|\varphi_i\rangle\}}\sum_ip_i\mathcal{N}(|\varphi_i\rangle),
\end{eqnarray}
where the minimum is taken over all possible convex partitions of $\rho_{A_1A_2}$ into pure state ensembles $\{p_i,|\varphi_i\rangle\}$, $0\leq p_i\leq1$ and $\sum_ip_i=1$.

The convex-roof extended negativity of assistance for mixed states is defined by [28],
\begin{eqnarray}
\mathcal{N}_a(\rho_{A_1A_2})=max_{\{p_i,|\varphi_i\rangle\}}\sum_ip_i\mathcal{N}(|\varphi_i\rangle),
\end{eqnarray}
where the maximum is taken over all possible convex partitions of $\rho_{A_1A_2}$ into pure state ensembles $\{p_i,|\varphi_i\rangle\}$, $0\leq p_i\leq1$ and $\sum_ip_i=1$.

For the $n$-qubit quantum state $\rho_{A_1|A_2A_3\cdots A_n}$, regarded as a bipartite state under bipartite partition $A_1|A_2A_3\cdots A_n$, the negativity satisfies for $0\leq\beta\leq1$ [28]:
\begin{eqnarray}
\mathcal{N}_a^\beta(\rho_{A_1|A_2A_3\cdots A_n})\leq \sum_{j=2}^{n}\beta^j\mathcal{N}_a^\beta(\rho_{A_1A_j})\leq\sum_{j=2}^{n}\mathcal{N}_a^\beta(\rho_{A_1A_j}),
\end{eqnarray}
where $\rho_{A_1A_i}$, $i=2,\cdots,n$, is the reduced density matrix of $\rho$, i.e, $\rho_{A_1A_i}=tr_{A_2\cdots A_{i-1}A_{i+1}\cdots A_n}(\rho)$.

Using Lemma 2 and the same idea of Theorem 3, we get the following polygamy inequalities of the $\beta$th power of negativity for $\beta\geq s$, $0<s\leq1$.
\begin{thm}\label{thm:4}
For any $2\otimes 2\otimes2^{n-2}$ tripartite state $\rho_{A_1A_2A_3}\in H_{A_1}\otimes H_{A_2}\otimes H_{A_3}$ and $k\geq1$, $0<q\leq1$, $\beta\geq s$, $0<s\leq1$.

(1) if ${\mathcal{N}_a}^s(\rho_{A_1A_3})\geq k{\mathcal{N}_a}^s(\rho_{A_1A_2})$, the concurrence satisfies
\begin{eqnarray}
{\mathcal{N}_a}^\beta(\rho_{A_1|A_2A_3})\leq q^{\frac{\beta}{s}}{\mathcal{N}_a}^\beta(\rho_{A_1A_2})+\frac{(1+k)^{\frac{\beta}{s}}-q^{\frac{\beta}{s}}}{k^{\frac{\beta}{s}}}{\mathcal{N}_a}^\beta(\rho_{A_1A_3}).
\end{eqnarray}

(2) if ${\mathcal{N}_a}^s(\rho_{A_1A_2})\geq k{\mathcal{N}_a}^s(\rho_{A_1A_3})$, the concurrence satisfies
\begin{eqnarray}
{\mathcal{N}_a}^\beta(\rho_{A_1|A_2A_3})\leq q^{\frac{\beta}{r}}{\mathcal{N}_a}^\beta(\rho_{A_1A_3})+\frac{(1+k)^{\frac{\beta}{s}}-q^{\frac{\beta}{s}}}{k^{\frac{\beta}{s}}}{\mathcal{N}_a}^\beta(\rho_{A_1A_2}).
\end{eqnarray}
\end{thm}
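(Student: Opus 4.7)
The plan is to mimic the proof of Theorem 3 step-for-step, but invoking Lemma 2 in its $x\geq 1$ branch rather than its $x\leq 0$ branch. The natural starting point is the CREN polygamy inequality (28) specialized to $n=3$ with exponent $s\in(0,1]$, namely $\mathcal{N}_a^s(\rho_{A_1|A_2A_3})\leq\mathcal{N}_a^s(\rho_{A_1A_2})+\mathcal{N}_a^s(\rho_{A_1A_3})$. Since $\beta\geq s>0$, the exponent $x:=\beta/s\geq 1$, and because $u\mapsto u^x$ is monotone nondecreasing on $[0,\infty)$ for $x\geq 1$, one may raise both sides to the $x$-th power to obtain $\mathcal{N}_a^\beta(\rho_{A_1|A_2A_3})\leq[\mathcal{N}_a^s(\rho_{A_1A_2})+\mathcal{N}_a^s(\rho_{A_1A_3})]^x$ without reversing the direction.

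Next I would dispose of the degenerate boundary cases in which one of $\mathcal{N}_a(\rho_{A_1A_2})$, $\mathcal{N}_a(\rho_{A_1A_3})$ vanishes; these follow directly from the inequality above together with the elementary observation that $(1+k)^x-q^x\geq k^x$ for $k\geq 1$, $0<q\leq 1$ and $x\geq 1$. For the generic case I assume $\mathcal{N}_a^s(\rho_{A_1A_3})\geq k\mathcal{N}_a^s(\rho_{A_1A_2})>0$, factor $\mathcal{N}_a^{sx}(\rho_{A_1A_2})$ out of the right-hand side, and set $t:=\mathcal{N}_a^s(\rho_{A_1A_3})/\mathcal{N}_a^s(\rho_{A_1A_2})\geq k$, so the bound becomes $\mathcal{N}_a^{sx}(\rho_{A_1A_2})(1+t)^x$. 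Lemma 2 then yields $(1+t)^x\leq q^x+\frac{(1+k)^x-q^x}{k^x}t^x$, and distributing the prefactor together with the identification $sx=\beta$ in the exponents reproduces (29) by the same bookkeeping chain used in the proof of Theorem 3.

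Part (2) follows from the identical argument after swapping the roles of $A_2$ and $A_3$; no new ideas are required, only a symmetric relabelling of the ratio defining $t$.

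I do not anticipate any substantive obstacle: the delicate analytic work has already been packed into Lemma 2, and the only step that deserves a moment of care is verifying that raising the $s$-th power polygamy inequality to the $x$-th power is legitimate, which is immediate from $x\geq 1$ and nonnegativity of $\mathcal{N}_a$. In this sense the argument is a line-for-line transliteration of Theorem 3, with concurrence replaced by $\mathcal{N}_a$ and Lemma 2 invoked in its $x\geq 1$ regime instead of its $x\leq 0$ regime.
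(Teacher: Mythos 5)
Your proposal is correct and is exactly the argument the paper intends: the paper gives no separate proof of Theorem 4, stating only that it follows from Lemma 2 "and the same idea of Theorem 3," and your write-up supplies precisely that transliteration, starting from the CREN-of-assistance polygamy inequality (28) at exponent $s$, raising to the power $x=\beta/s\geq1$, and invoking the $x\geq1$ branch of Lemma 2. Your additional care with the degenerate cases (via $(1+k)^x\geq q^x+k^x$) is a welcome detail the paper leaves implicit.
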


Let ${\mathcal{N}_a}(\rho_{A_1A_i})$ $(i=2,\cdots,n-1)$ by ${\mathcal{N}_a}_{A_1A_i}$, ${\mathcal{N}_a}(\rho_{A_1|A_{j+1}\cdots A_n})$ $(j=1,\cdots,n-1)$ by ${\mathcal{N}_a}_{A_1|A_{j+1}\cdots A_n}$, and $l=\frac{(1+k)^{\frac{\alpha}{r}}-q^{\frac{\alpha}{r}}}{k^{\frac{\alpha}{r}}}$ ($k\geq1$, $0<q\leq1$, $\beta\geq s$, $0<s\leq1$). The polygamy inequality of negativity for $n$-qubit quantum states is given by the following result:

\begin{thm} \label{thm:5}
For any $n$-qubit quantum state $\rho_{A_1A_2A_3\cdots A_n}$ and $k\geq1$, $0<q\leq1$, $\beta\geq s$, $0<s\leq1$, we obtain that

(1) if $k{\mathcal{N}_a}_{A_1A_i}^s\leq {\mathcal{N}_a}_{A_1|A_{i+1}\cdots A_n}^s$ for $i=2,\cdots,m$ and ${\mathcal{N}_a}_{A_1A_j}^s\geq k\mathcal{N}_{A_1|A_{j+1}\cdots A_n}^s$ for $j=m+1,\cdots,n-1$, $\forall$ $2\leq m\leq n-2$, $n\geq4$, we have
\begin{eqnarray}
{\mathcal{N}_a}_{A_1|A_2A_3\cdots A_n}^\beta&\leq& q^{\frac{\beta}{s}}({\mathcal{N}_a}_{A_1A_2}^\beta+l{\mathcal{N}_a}_{A_1A_3}^\beta+\cdots+l^{m-2}{\mathcal{N}_a}_{A_1A_m}^\beta)\nonumber\\
&+&l^m[{\mathcal{N}_a}_{A_1A_{m+1}}^\beta+q^{\frac{\beta}{s}}{\mathcal{N}_a}_{A_1A_{m+2}}^\beta+\cdots+q^{\frac{(n-m-2)\beta}{s}}{\mathcal{N}_a}_{A_1A_{n-1}}^\beta]\nonumber\\
&+&l^{m-1}q^{\frac{(n-m-1)\beta}{s}}{\mathcal{N}_a}_{A_1A_n}^\beta.
\end{eqnarray}

(2) if $k{\mathcal{N}_a}_{A_1A_i}^s\leq {\mathcal{N}_a}_{A_1|A_{i+1}\cdots A_n}^s$ for $i=2,\cdots,n-1$ and $n\geq3$, we have
\begin{eqnarray}
{\mathcal{N}_a}_{A_1|A_2A_3\cdots A_n}^\beta\leq q^{\frac{\beta}{s}}({\mathcal{N}_a}_{A_1A_2}^\beta+l{\mathcal{N}_a}_{A_1A_3}^\beta+\cdots+l^{n-3}{\mathcal{N}_a}_{A_1A_{n-1}}^\beta)+l^{n-2}{\mathcal{N}_a}_{A_1A_n}^\beta.
\end{eqnarray}

(3) if ${\mathcal{N}_a}_{A_1A_j}^s\geq k{\mathcal{N}_a}_{A_1|A_{j+1}\cdots A_n}^s$ for $j=2,\cdots,n-1$ and $n\geq3$, we have
\begin{eqnarray}
{\mathcal{N}_a}_{A_1|A_2A_3\cdots A_n}^\beta&\leq& l[{\mathcal{N}_a}_{A_1A_2}^\beta+q^{\frac{\beta}{s}}{\mathcal{N}_a}_{A_1A_3}^\beta+\cdots+q^{\frac{(n-3)\beta}{s}}{\mathcal{N}_a}_{A_1A_{n-1}}^\beta]\nonumber\\
&+&q^{\frac{(n-2)\beta}{s}}{\mathcal{N}_a}_{A_1A_n}^\beta.
\end{eqnarray}

\end{thm}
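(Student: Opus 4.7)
\textbf{Proof proposal for Theorem \ref{thm:5}.}

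The plan is to mirror the proof of Theorem \ref{thm:2}, but to replace the monogamy lower bounds supplied by Theorem \ref{thm:1} with the polygamy upper bounds supplied by Theorem \ref{thm:4}, and to apply them recursively along the chain of bipartitions $A_1|A_2A_3\cdots A_n$, $A_1|A_3A_4\cdots A_n$, and so on. At each stage I would regard the remaining state $\rho_{A_1|A_jA_{j+1}\cdots A_n}$ as an effectively tripartite state with parties $A_1$, $A_j$, and the composite $A_{j+1}\cdots A_n$, so that Theorem \ref{thm:4} can be invoked directly on that tripartition.

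For part (1), I would first iterate inequality (29) for $i=2,\ldots,m$, using the assumption $k{\mathcal{N}_a}_{A_1A_i}^s\leq {\mathcal{N}_a}_{A_1|A_{i+1}\cdots A_n}^s$ at each step. This peels off the ``small'' factors $A_2,\ldots,A_m$ one at a time, giving a partial bound of the form ${\mathcal{N}_a}_{A_1|A_2\cdots A_n}^\beta\leq q^{\beta/s}({\mathcal{N}_a}_{A_1A_2}^\beta+l{\mathcal{N}_a}_{A_1A_3}^\beta+\cdots+l^{m-2}{\mathcal{N}_a}_{A_1A_m}^\beta)+l^{m-1}{\mathcal{N}_a}_{A_1|A_{m+1}\cdots A_n}^\beta$. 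Next I would iterate inequality (30) for $j=m+1,\ldots,n-1$ under the complementary hypothesis ${\mathcal{N}_a}_{A_1A_j}^s\geq k{\mathcal{N}_a}_{A_1|A_{j+1}\cdots A_n}^s$, bounding ${\mathcal{N}_a}_{A_1|A_{m+1}\cdots A_n}^\beta$ from above by $l[{\mathcal{N}_a}_{A_1A_{m+1}}^\beta+q^{\beta/s}{\mathcal{N}_a}_{A_1A_{m+2}}^\beta+\cdots+q^{(n-m-2)\beta/s}{\mathcal{N}_a}_{A_1A_{n-1}}^\beta]+q^{(n-m-1)\beta/s}{\mathcal{N}_a}_{A_1A_n}^\beta$. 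Substituting the second estimate into the first and collecting exponents reproduces (31).

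Parts (2) and (3) are the boundary cases $m=n-1$ and $m=1$, respectively: (32) follows by iterating only (29) for every $i\in\{2,\ldots,n-1\}$ and terminating at ${\mathcal{N}_a}_{A_1A_n}^\beta$, while (33) follows by iterating only (30) for every $j\in\{2,\ldots,n-1\}$. In both cases the powers of $l$ and of $q^{\beta/s}$ telescope in exactly the same bookkeeping pattern as in the proof of Theorem \ref{thm:2}, merely with the inequality sign reversed.

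The main obstacle I anticipate is organizational rather than analytical: at each step one must check that the global ordering hypothesis between ${\mathcal{N}_a}_{A_1A_i}^s$ and ${\mathcal{N}_a}_{A_1|A_{i+1}\cdots A_n}^s$ is precisely the tripartite hypothesis required by Theorem \ref{thm:4} on the effective $A_1|A_i|A_{i+1}\cdots A_n$ splitting, and that treating $A_{j+1}\cdots A_n$ as a single subsystem preserves the bipartite polygamy relation underlying Theorem \ref{thm:4}. Once this compatibility is established, the remaining arithmetic is a routine accumulation of two geometric-style sums in $l$ and $q^{\beta/s}$, and the three claimed bounds fall out by matching indices.
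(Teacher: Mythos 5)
Your proposal is correct and follows exactly the argument the paper intends: the paper states Theorem \ref{thm:5} without an explicit proof, implicitly relying on iterating the tripartite polygamy bounds (29) and (30) of Theorem \ref{thm:4} along the nested bipartitions in precisely the same bookkeeping pattern as the proof of Theorem \ref{thm:2}. Your index accounting (the factor $l^{m-1}$ distributing as $l^m$ over the bracketed terms and $l^{m-1}q^{(n-m-1)\beta/s}$ on the final term) matches (31)--(33), and your noted compatibility check—that the underlying polygamy relation for ${\mathcal{N}_a}$ must hold when $A_{j+1}\cdots A_n$ is treated as a single subsystem—is indeed the one hypothesis the paper leaves tacit.
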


We can generalize our results to polygamy inequality for quantum correlation measures. Let quantum correlation measure $\tau$ be
the negativity of assistance, concurrence of assistance or entanglement of assistance, then we have the following polygamy inequalities:

\begin{corollary}
For any $n$-partite quantum state $\rho_{A_1A_2A_3\cdots A_n}$, $\beta\geq s$, $0<s\leq1$ or $\beta\leq0$, we have that

(1) if $k\tau_{A_1A_i}^s\leq \tau_{A_1|A_{i+1}\cdots A_n}^s$ for $i=2,\cdots,m$ and $\tau_{A_1A_j}^s\geq k\tau_{A_1|A_{j+1}\cdots A_n}^s$ for $j=m+1,\cdots,n-1$, $\forall$ $2\leq m\leq n-2$, $n\geq4$, we have
\begin{eqnarray}
\tau_{A_1|A_2A_3\cdots A_n}^\beta&\leq& q^{\frac{\beta}{s}}(\tau_{A_1A_2}^\beta+l\tau_{A_1A_3}^\beta+\cdots+l^{m-2}\tau_{A_1A_m}^\beta)\nonumber\\
&+&l^m[\tau_{A_1A_{m+1}}^\beta+q^{\frac{\beta}{s}}\tau_{A_1A_{m+2}}^\beta+\cdots+q^{\frac{(n-m-2)\beta}{s}}\tau_{A_1A_{n-1}}^\beta]\nonumber\\
&+&l^{m-1}q^{\frac{(n-m-1)\beta}{s}}\tau_{A_1A_n}^\beta,
\end{eqnarray}
where $k\geq1$, and $0\leq q\leq1$.

(2) if $k\tau_{A_1A_i}^s\leq \tau_{A_1|A_{i+1}\cdots A_n}^s$ for $i=2,\cdots,n-1$ and $n\geq3$, we have that
\begin{eqnarray}
\tau_{A_1|A_2A_3\cdots A_n}^\beta\leq q^{\frac{\beta}{s}}(\tau_{A_1A_2}^\beta+l\tau_{A_1A_3}^\beta+\cdots+l^{n-3}\tau_{A_1A_{n-1}}^\beta)+l^{n-2}\tau_{A_1A_n}^\beta,
\end{eqnarray}
where $k\geq1$, and $0\leq q\leq1$.

(3) if $\tau_{A_1A_j}^s\geq k\tau_{A_1|A_{j+1}\cdots A_n}^s$ for $j=2,\cdots,n-1$ and $n\geq3$, we have
\begin{eqnarray}
\tau_{A_1|A_2A_3\cdots A_n}^\beta\leq l[\tau_{A_1A_2}^\beta+q^{\frac{\beta}{s}}\tau_{A_1A_3}^\beta+\cdots+q^{\frac{(n-3)\beta}{s}}\tau_{A_1A_{n-1}}^\beta]+q^{\frac{(n-2)\beta}{s}}\tau_{A_1A_n}^\beta,
\end{eqnarray}
where $k\geq1$, and $0\leq q\leq1$.
\end{corollary}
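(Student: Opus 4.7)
The plan is to replicate the two-stage inductive argument from the proof of Theorem 5, with $\mathcal{N}_a$ replaced throughout by the generic correlation measure $\tau$. What makes this work for each listed $\tau$ (negativity of assistance, concurrence of assistance, entanglement of assistance) is that in each of the two regimes stated, namely $\beta \ge s$ with $0<s\le 1$ or $\beta \le 0$, $\tau$ inherits a tripartite base inequality entirely analogous to Theorem 3 (via the monogamy-type relation $\tau^s(\rho_{A_1|A_2A_3})\ge \tau^s(\rho_{A_1A_2})+\tau^s(\rho_{A_1A_3})$ for $s\ge 2$) or Theorem 4 (via the polygamy-type relation $\tau^s(\rho_{A_1|A_2A_3})\le \tau^s(\rho_{A_1A_2})+\tau^s(\rho_{A_1A_3})$ for $0<s\le 1$). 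So the first step will be to note that Lemma 2 can be applied to either of these base inequalities in the same way it was applied in Theorems 3 and 4, yielding for a tripartite splitting: if $\tau^s(\rho_{A_1A_3})\ge k\tau^s(\rho_{A_1A_2})$ then
\begin{eqnarray*}
\tau^\beta(\rho_{A_1|A_2A_3})\le q^{\frac{\beta}{s}}\tau^\beta(\rho_{A_1A_2})+l\,\tau^\beta(\rho_{A_1A_3}),
\end{eqnarray*}
and symmetrically in the other ordering. This is the tripartite building block that drives the rest of the proof.

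Second, I would iterate this building block exactly as in the proof of Theorem 5. For case (1), the condition $k\tau_{A_1A_i}^s\le \tau_{A_1|A_{i+1}\cdots A_n}^s$ for $i=2,\ldots,m$ lets me peel off $A_2$, then $A_3$, $\ldots$, then $A_m$ in succession, each step multiplying the residual $\tau^\beta(\rho_{A_1|A_{i+1}\cdots A_n})$ by $l$ and depositing the term $q^{\beta/s}l^{i-2}\tau_{A_1A_i}^\beta$. After $m-1$ applications this produces
\begin{eqnarray*}
\tau_{A_1|A_2\cdots A_n}^\beta \le q^{\frac{\beta}{s}}\bigl(\tau_{A_1A_2}^\beta+l\tau_{A_1A_3}^\beta+\cdots+l^{m-2}\tau_{A_1A_m}^\beta\bigr)+l^{m-1}\tau_{A_1|A_{m+1}\cdots A_n}^\beta.
\end{eqnarray*}
Then the complementary condition $\tau_{A_1A_j}^s\ge k\tau_{A_1|A_{j+1}\cdots A_n}^s$ for $j=m+1,\ldots,n-1$ activates the symmetric form of the tripartite building block; applying it repeatedly to the residual $\tau_{A_1|A_{m+1}\cdots A_n}^\beta$ gives an expansion with a leading factor $l$ and powers of $q^{\beta/s}$ attached to the later terms. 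Substituting this back yields (34).

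Cases (2) and (3) are handled by using only one half of the argument: (2) is the limiting case $m=n-1$, where only the first peeling-off phase runs; (3) is the limiting case $m=1$, where only the symmetric phase is used. In both situations the bookkeeping of the $l^i$ and $q^{i\beta/s}$ factors is immediate from the iteration already set up.

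The only nontrivial point is step one, the verification of the tripartite base inequality for each $\tau$ in the list. For the $\beta\ge s$ branch with $0<s\le 1$, the needed polygamy-type subadditivity at exponent $s$ is documented in the literature for negativity of assistance (equation (28)) and carries over to concurrence of assistance and entanglement of assistance; for the $\beta\le 0$ branch, the monogamy-type superadditivity at exponent $s\ge 2$ is precisely the dual input used in Theorem 3. Once these are in hand, Lemma 2 converts them into the tripartite building block and the iteration is purely mechanical, exactly parallel to the proof of Theorem 5; I would therefore not grind through the calculation but simply say ``the result follows by applying the argument of Theorem 5 with $\mathcal{N}_a$ replaced by $\tau$.''
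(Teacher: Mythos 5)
Your proposal is correct and follows essentially the same route as the paper, which states this corollary without separate proof precisely because the iteration of Theorem 5 carries over verbatim once each admissible $\tau$ is known to satisfy the requisite tripartite base inequality (polygamy-type subadditivity at exponent $0<s\leq1$ for the $\beta\geq s$ branch, or the monogamy-type superadditivity for the $\beta\leq0$ branch as in Theorem 3), to which Lemma 2 is then applied. Your explicit identification of that base inequality as the only nontrivial input is exactly the right observation.
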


\section{Conclusion}
Mono- and polygamous inequalities 
of quantum correlations are fundamental properties in quantum systems. In this paper we have
studied several families of monogamy and polygamy inequalities of measures for quantum correlation in arbitrary dimensional multipartite quantum systems.
We have derived the monogamy inequality of the $\alpha$th ($0\leq\alpha\leq\frac{r}{2}, r\geq2$) power of the concurrence for $2\otimes2\otimes2^{n-2}$ quantum states. Similarly, analytical monogamy inequalities for the $n$-qubit states have also been presented. The same method is generalized to the monogamy relations of quantum correlation for multipartite quantum systems. These new inequalities provide new and better angles to study the
quantum systems. We have used examples to show that our results are tighter than the existing ones. Moreover, we have presented for the polygamy inequality of the $\beta$th ($\beta\leq0$) power of concurrence and the $\beta$th ($\beta\geq s, 0\leq s\leq1$) power of negativity for $2\otimes2\otimes2^{n-2}$ quantum states and generalized to the $n$-qubit quantum systems. Finally, we have derived the polygamy inequalities of other quantum correlation measures such as concurrence of assistance and entanglement of assistance.

\textbf {Acknowledgments}
This work is supported in part by the Simons Foundation under grant no. 523868 and the National Natural Science Foundation of China
under grant nos. 12126351 and 12126314.

\vskip 1in
\textbf {Statements and Declarations}

This work is supported in part by the Simons Foundation under grant no. 523868 and the National Natural Science Foundation of China
under grant nos. 12126351 and 12126314.

\end{CJK*}
\end{document}